\documentclass[prd,nofootinbib,superscriptaddress,twocolumn]{revtex4}

\usepackage{amsfonts,amssymb,amsthm,bbm}

\usepackage{amsmath}

\usepackage{hyperref}

\usepackage{color,psfrag}
\usepackage[dvips]{graphicx}

\usepackage{tikz}
\usetikzlibrary{calc}
\usetikzlibrary{decorations.pathmorphing}
\usetikzlibrary{shapes.geometric}
\usetikzlibrary{arrows,decorations.markings}

\newcommand{\C}{{\mathbb C}}
\newcommand{\N}{{\mathbb N}}

\newcommand{\cH}{{\mathcal H}}

\newcommand{\cP}{{\mathcal P}}

\newcommand{\cI}{{\mathcal I}}

\newcommand{\SU}{\mathrm{SU}}

\newcommand{\SL}{\mathrm{SL}}

\newcommand{\be}{\begin{equation}}
\newcommand{\ee}{\end{equation}}
\newcommand{\beq}{\begin{eqnarray}}
\newcommand{\eeq}{\end{eqnarray}}
\newcommand{\bes}{\begin{eqnarray}}
\newcommand{\ees}{\end{eqnarray}}

\newcommand{\alink}[4]
{\draw[decoration={markings,mark=at position 0.6 with {\arrow[scale=1.5,>=stealth]{>}}},postaction={decorate}] (#1) -- node[#3,pos=.5]{$#4$}(#2)}
\newcommand{\link}[2]
{\draw[decoration={markings,mark=at position 0.6 with {\arrow[scale=1.5,>=stealth]{>}}},postaction={decorate}] (#1) --(#2)}

\newcommand{\su}{{\mathfrak{su}}}

\newcommand{\la}{\langle}
\newcommand{\ra}{\rangle}

\newcommand{\tr}{{\mathrm{Tr}}}
\newcommand{\f}{\frac}

\def\nn{\nonumber}
\def\pp{\partial}

\def\rd{\mathrm{d}}

\def\vphi{\varphi}

\newcommand{\id}{\mathbb{I}}

\def\vJ{\vec{J}}
\def\trho{\widetilde{\rho}}
\def\tpsi{{\Psi}}
\def\tE{\widetilde{E}}
\def\tk{\tilde{k}}
\def\tm{\tilde{m}}

\newtheorem{theorem}{Theorem}[section]
\newtheorem{lemma}[theorem]{Lemma}
\newtheorem{prop}[theorem]{Proposition}
\newtheorem{coro}[theorem]{Corollary}

\begin{document}

\title{Intertwiner Entanglement on Spin Networks}

\author{{\bf Etera R. Livine}}\email{etera.livine@ens-lyon.fr}
\affiliation{Universit\'e de Lyon, ENS de Lyon, CNRS, Laboratoire de Physique LPENSL, 69007 Lyon, France}

\date{\today}

\begin{abstract}

In the context of quantum gravity, we clarify entanglement calculations on spin networks: we distinguish the gauge-invariant entanglement between intertwiners located at the nodes and the  entanglement between spin states located on the network's links. We compute explicitly these two notions of entanglement between neighboring nodes and show that they are always related to the typical $\ln(2j+1)$ term depending on the spin $j$ living on the link between them. This $\ln(2j+1)$ contribution comes from  looking at non-gauge invariant states, thus we interpret it as gauge-breaking and unphysical. In particular, this confirms that pure spin network basis states do not carry any physical entanglement, so that true entanglement and correlations in loop quantum gravity comes from spin or intertwiner superpositions.

\end{abstract}

\maketitle


In the background independent framework of quantum gravity, quantum states of geometry are defined up to diffeomorphisms with no reference to any coordinate systems or special choice of metric. Then distances and curvature, which define the geometry, are reconstructed from the interaction between subsystems, best quantified by the correlation and entanglement between those subsystems. This perspective sets the field of quantum information at the heart of research in quantum gravity, with essential roles to play for entanglement, decoherence and quantum localization in probing quantum states of geometries and thinking about the quantum-to-classical transition for the space-time geometry.

The loop quantum gravity framework provides a local definition of quantum states of space geometry as spin networks. They are collections of quantum states invariant under 3d rotations linked to each other by rotations encoding the change of frame from one state to the next, thus forming a network of frame transformations defining the quantum 3d space. These spin networks can further be understood as the quantization of discrete ``twisted'' geometries \cite{Freidel:2010aq,Dupuis:2012vp,Dupuis:2012yw}, where the 3d space is made of quantized flat chunks of volume glued together. Then the spin networks' evolution is described by spinfoam path integrals, built from topological quantum field theory (TQFT) \cite{Barrett:1997gw,Baez:1997zt,Livine:2010zx,Perez:2012wv}. This quantum geometry concept is ultimately meant to replace the paradigm of classical smooth manifold  to describe the space-time geometry.

Spin networks describe the geometry at the Planck scale. They need to be coarse-grain in order to derive the structure of quantum geometries at larger scales. The issue is that the definition of  spin networks is ultra-local, with algebraic structures associated to each node of the network without any overall constraint (at the kinematical level), so it is a hard problem to get a dynamical coherent picture of smooth geometry arising semi-classically. The quantum gravity dynamics, implementing the quantum Einstein equations and the diffeomorphism invariance of quantum geometry states, should ultimately address this question. However we would still require good observables to probe the large scale structure of spin networks. The idea is to study correlations and entanglement on spin networks and use them to probe and reconstruct the geometry, as suggested in \cite{Livine:2006xk,Feller:2015yta,Bianchi:2016tmw,Chirco:2017xjb}. The goal would be to identify holographic states, satisfying a local area-entropy law, with algebraically-decaying correlations on mesoscopic scales allowing for a semi-classical interpretation as smooth geometries at some point between the Planck scale and our scale. This line of research follows the same logic than the holographic reconstruction of geometry which is being developed from the viewpoint of the AdS/CFT correspondence and gauge-gravity dualities.

\medskip

Here, we would like to start by clarifying the notion of entanglement on spin network states. We insist on respecting the local $\SU(2)$ gauge invariance and on using $\SU(2)$-invariant observables to probe spin networks.

A spin network state is constructed on a closed oriented graph $\Gamma$. The Hilbert space of wave-functions (for quantum states of geometry) on that graph is the space of $L^{2}$ functions of $\SU(2)$ group elements on every link $e\in\Gamma$ while being invariant under the $\SU(2)$ action at every node $v\in\Gamma$:
\be
\cH_{\Gamma}=L^{2}(\SU(2)^{\times E}/\SU(2)^{\times V})\,,
\ee
where $E$ is the number of edges in $\Gamma$ and $V$ its number of vertices. The $\SU(2)$ gauge invariance at the nodes of an arbitrary state $\vphi$ reads:
\be
\vphi(\{g_{e}\}_{e\in\Gamma})=\vphi(\{h_{s(e)}g_{e}h_{t(e)}^{-1}\}_{e\in\Gamma})
\quad
\forall h_{v}\in\SU(2)^{\times V}
\,,
\nn
\ee
where $s(e)$ stands for the source vertex of the oriented $e$ and $t(e)$ the corresponding target vertex. We define basis states on $\cH_{\Gamma}$ as {\it spin networks}, labeled by a spin $j_{e}\in\f\N2$ on each edge $e$ and an intertwiner $I_{v}$ at every node $v$, as illustrated on fig.\ref{fig:spinnetwork}:
\be
\cH_{\Gamma}=\bigoplus_{\{j_{e},I_{v}\}_{e,v\in\Gamma}}\C\,|\{j_{e},I_{v}\}\ra\,.
\ee
A spin $j$ defines a irreducible representation of the Lie group $\SU(2)$. The associated  Hilbert space, noted $V^{j}$, has dimension $d_{j}=(2j+1)$. Writing $J^{a}$ with $a=x,y,z$ for the $\su(2)$ Lie algebra generators, we use the usual basis of $V^{j}$ diagonalizing both the $\su(2)$ Casimir $\vJ^{2}\equiv J^{a}J^{a}$ and the generator $J^{z}$, and labeled by the spin $j$ and the magnetic momentum $m$ running by integer step from $-j$ to $+j$:
\be
V^{j}=\bigoplus_{-j\le m\le j}\C\,|j,m\ra\,,
\quad
\dim V^{j}=2j+1
\,,
\ee
\be
\vJ^{2}\,|j,m\ra=j(j+1)\,|j,m\ra\,,\quad
J^{z}\,|j,m\ra=m\,|j,m\ra
\,.
\nn
\ee
An intertwiner $I_{v}$ at the node $v$ is a $\SU(2)$-invariant state (a.k.a. a singlet state) in the tensor product of all the spins living on the edges linked to $v$:
\be
I_{v}\in \cI_{v}=\mathrm{Inv}_{\SU(2)}
\big{[}
\bigotimes_{e\ni v}V^{j_{e}}
\big{]}\,.
\ee
By definition of an irreducible representation, bivalent intertwiners only exist if the two spins $j_{1}=j_{2}$ are equal and are then unique. Trivalent intertwiners between three spins exists if and only if those spins satisfy triangular inequalities, $|j_{1}-j_{2}|\le j_{3}\le(j_{1}+j_{2}) $ and are then unique: they are given by the Clebsh-Gordan coefficients. From valence 4 onwards, the intertwiner space grows in dimension and they are multiple non-trivial intertwiner states.

\begin{figure}[h!]

\begin{tikzpicture}[scale=1.4]

\coordinate(a) at (0,0) ;
\coordinate(b) at (.5,1);
\coordinate(c) at (.9,-.1);
\coordinate(d) at (.3,-.8);
\coordinate(e) at (1.3,.7);
\coordinate(f) at (2.3,1.1);
\coordinate(g) at (2,.4);
\coordinate(h) at (2.7,.1);
\coordinate(i) at (2.1,-.2);
\coordinate(j) at (1.4,.-.5);

\draw (a) node {$\bullet$} node[left]{$I_{A}$};
\draw (b) node {$\bullet$}node[above]{$I_{B}$};
\draw (c) node {$\bullet$} node[right]{$I_{C}$};
\draw (d) node {$\bullet$} node[below]{$I_{D}$};
\draw (e) node {$\bullet$} node[above]{$I_{E}$};
\draw (f) node {$\bullet$};
\draw (g) node {$\bullet$};
\draw (h) node {$\bullet$};
\draw (i) node {$\bullet$};
\draw (j) node {$\bullet$};

\alink{a}{b}{left}{j_{1}};
\alink{a}{c}{above}{j_{2}};
\alink{a}{d}{left}{j_{3}};
\link{b}{c};
\link{c}{d};
\link{b}{e};
\link{e}{f};
\link{c}{e};
\link{e}{g};
\link{f}{g};
\link{f}{h};
\link{h}{g};
\link{h}{i};
\link{g}{i};
\link{j}{i};
\link{d}{j};
\link{c}{j};

				
\end{tikzpicture}

\caption{A spin network, on a closed oriented graph $\Gamma$, is a basis state is labeled with  spins $j_{e}\in\f N2$ on the graph edges and intertwiner states -singlet states- on the graph vertices.}
\label{fig:spinnetwork}
\end{figure}
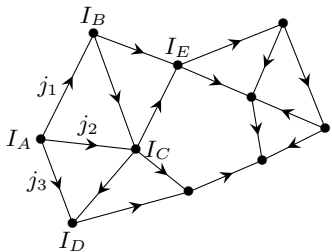

In the geometrical interpretation of loop quantum gravity, intertwiners represent the 3d volume excitations while spins give the quanta of area gluing neighboring chunks of volume (in other words, the cross-section). This is validated by the geometrical interpretation of intertwiners as quantized convex polyhedra \cite{Freidel:2009nu,Freidel:2009ck,Bianchi:2010gc,Livine:2013tsa} with the spins giving the area of the polyhedra' faces.

\medskip

We would like to study the basic case of two neighboring nodes, $A$ and $B$, linked by a single edge carrying a spin $j$, as depicted on fig.\ref{2vertex}.
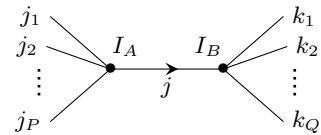
\begin{figure}[h]
\begin{center}

\begin{tikzpicture}[scale=1]

\coordinate(a) at (0,0) ;
\coordinate(b) at (1.5,0);

\draw (a) node {$\bullet$} ++(0.2,0.3) node{$I_{A}$};
\draw (b) node {$\bullet$} ++(-0.2,0.3)  node{$I_{B}$};
\alink{a}{b}{below}{j};

\draw (a)--++(-.85,0.3) node[left]{$j_{2}$};
\draw (a)--++(-.8,0.7) node[left]{$j_{1}$};
\draw (a)--++(-.8,-.7)  node[left]{$j_{P}$};

\draw (b)--++(.85,0.3) node[right]{$k_{2}$};
\draw (b)--++(.8,0.7) node[right]{$k_{1}$};
\draw (b)--++(.8,-.7)  node[right]{$k_{Q}$};

\draw[dotted,line width=1pt] (-.95,0) -- (-.95,-.4);
\draw[dotted,line width=1pt] (1.5+.95,0) -- (1.5+.95,-.4);

\end{tikzpicture}

\caption{Two vertices $A$ and $B$ on a spin network state, linked by a single edge carrying a spin $j$. The external spins are labeled $j_{1},..,j_{P}$ for the vertex $A$, while we label them $k_{1},..,k_{Q}$ for the vertex $B$.
We must distinguish the intertwiner Hilbert space  $\cH^0_{AB}=\cH^0_{A}\otimes \cH^0_{B}$, consisting in tensor products of intertwiner states at the nodes $A$ and $B$, from the boundary spin Hilbert space $H^{\pp}_{AB}=\bigotimes_{i}^P V^{j_{i}}\otimes \bigotimes_{i}^Q V^{k_{i}}$.
\label{2vertex}}
\end{center}
\end{figure}
This configuration is usually thought to produce an entanglement of $\ln(2j+1)$ between the two nodes \cite{Livine:2006xk,Chirco:2017xjb}. We will challenge this and argue that one must distinguish the entanglement between the two intertwiners and the entanglement between the spin states living on the external legs of $A$ and $B$.

We will show that the intertwiner entanglement is bounded by the intertwiner space dimensions and can be much larger than $\ln(2j+1)$. For a spin network basis state, the two intertwiners are completely decoupled and the intertwiner entanglement simply vanishes, while the spin state entanglement is always equal to $\ln(2j+1)$. This hints that this $\ln(2j+1)$ contribution is unphysical and is a signature of
looking at a non-gauge-invariant observable.
We further prove that for an arbitrary spin network superposition, the spin state entanglement is always equal to the intertwiner entanglement plus exactly $\ln(2j+1)$. This supports our claim that the intertwiner entanglement is the physical quantum correlation between nodes to consider in loop quantum gravity and that  true entanglement comes from spin network superpositions\footnotemark.
\footnotetext{
Since intertwiners can be thought of mathematically as spin labels on virtual links (within the nodes), this point of view is consistent with the spinfoam graviton calculations \cite{Bianchi:2006uf,Livine:2006it,Christensen:2007rv} and the recent correlation calculations on spin network states from \cite{Feller:2015yta} and \cite{Bianchi:2016tmw,Bianchi:2016hmk} where one looks at the correlation and entanglement between the spin labels on different labels (or between holonomy operator insertions).
}
%

\section{Intertwiner entanglement vs Spin state entanglement}

We focus on two neighboring nodes of a spin network, $A$ and $B$, as on fig.\ref{2vertex}, linked by a single edge decorated with a fixed given spin $j$. The other legs attached to the vertex $A$ are decorated with spins $j_{1},..,j_{P}$, while the other edges attached to $B$ carry the spins $k_{1},..,k_{Q}$. The Hilbert space of (bounded) spin network states for this configuration at fixed spins is:
\be
\cH_{AB}^{0}=\cH^0_{A}\otimes\cH^0_{B}\,,
\ee
where $\cH^0_{A}$ and $\cH^0_{B}$ are the spaces of intertwiners attached to the two nodes:
\be
\left\{
\begin{array}{l}
\cH^0_{A}=\mathrm{Inv}_{\SU(2)}
\big{[}
V^{j_{1}}\otimes..\otimes V^{j_{P}}
\otimes V^{j}
\big{]}
\vspace*{2mm}\\
\cH^0_{B}=\mathrm{Inv}_{\SU(2)}
\big{[}
V^{k_{1}}\otimes..\otimes V^{k_{Q}}
\otimes V^{j}
\big{]}
\end{array}
\right.
\ee
The subscript $0$ is to remind that all these states are $\SU(2)$-invariant.
Considering a (pure) state $|\psi\ra\in\cH_{AB}^{0}$, we define the entanglement between $A$ and $B$ as the Von Neumann entropy of the reduced density matrices:
\be
\rho=|\psi\ra\la\psi|
\,,\quad
\rho_{A}=\tr_{B}\rho
\,,\quad
\rho_{B}=\tr_{A}\rho
\,,\nn
\ee
\be
E(A|B)=-\tr\,\rho_{A}\ln\rho_{A}=-\tr\,\rho_{B}\ln\rho_{B}\,.
\ee
This entanglement entropy is bounded by the dimension of the intertwiner spaces $\cH^0_{A}$ and $\cH^0_{B}$:
\be
E(A|B)\le \min\big{(}
\ln \dim\cH^0_{A},
\ln \dim\cH^0_{B}
\big{)}\,.
\ee
\begin{figure}[h]

\begin{tikzpicture}[scale=1]

\coordinate(a) at (0,0) ;
\coordinate(b) at (1.5,0);

\draw (a) node {$\bullet$} ++(0.2,0.3) node{$I_{A}$};
\draw (b) node {$\bullet$} ++(-0.2,0.3)  node{$I_{B}$};
\alink{a}{b}{below}{j};

\draw (a)--++(-.85,0.3) ;
\draw (a)--++(-.8,0.7);
\draw (a)--++(-.8,-.7);

\draw (b)--++(.85,0.3) ;
\draw (b)--++(.8,0.7) ;
\draw (b)--++(.8,-.7)  ;

%

\draw[dotted,line width=.7pt] (-.8,0) -- (-.8,-.4);
\draw[dotted,line width=.7pt] (1.5+.8,0) -- (1.5+.8,-.4);

\draw[dashed] (0,0.1) ellipse (16pt and 17pt);
\draw[dashed] (1.5,0.1) ellipse (16pt and 17pt);
\draw[<->] (.2,.8)  to[bend left] node[midway,above]{$E(A|B)$} (1.3,.8);

\end{tikzpicture}
\vspace*{4mm}

\begin{tikzpicture}[scale=1]

\coordinate(a) at (0,0) ;
\coordinate(b) at (1.5,0);

\draw (a) node {$\bullet$};
\draw (b) node {$\bullet$};
\alink{a}{b}{above}{j};

\draw (a)--++(-.85,0.3) node[left]{$j_{2}$};
\draw (a)--++(-.8,0.7) node[left]{$j_{1}$};
\draw (a)--++(-.8,-.7)  node[left]{$j_{P}$};

\draw (b)--++(.85,0.3) node[right]{$k_{2}$};
\draw (b)--++(.8,0.7) node[right]{$k_{1}$};
\draw (b)--++(.8,-.7)  node[right]{$k_{Q}$};

\draw[dotted,line width=1pt] (-.95,0) -- (-.95,-.4);
\draw[dotted,line width=1pt] (1.5+.95,0) -- (1.5+.95,-.4);

\draw[dashed] (-1.05,0) ellipse (16pt and 30pt);
\draw[dashed] (2.55,0) ellipse (16pt and 30pt);
\draw[<->] (-.7,-1)  to[bend right] node[midway,above]{$\tE(A|B)$} (2.2,-1);

\end{tikzpicture}

\caption{Intertwiner entanglement entropy (above) versus boundary spin state entanglement (below): one can either look directly at the entanglement $E(A|B)$ between the intertwiner states at the two nodes or probe the correlation between the two nodes by looking at the entanglement $\tE(A|B)$  it induces on the boundary spins.
\label{fig:entanglement}}

\end{figure}
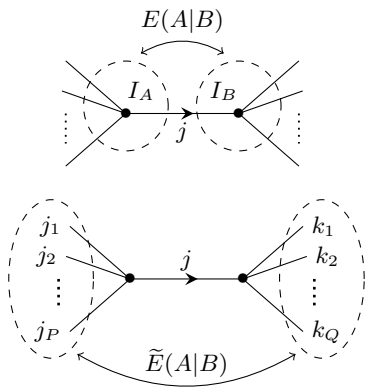

The other way to look at the regions $A$ and $B$ in to glue them along the intermediate edge and group them into a single region $AB$. The edge between $A$ and $B$, carrying the spin $j$, is now within this coarser region $AB$, while the remaining edges carrying the spins $j_{1},..,j_{P},k_{1},..,k_{Q}$ are all on the boundary. Thus the Hilbert space of boundary spin states is:
\be
H^{\pp}_{AB}=
\big{(}
V^{j_{1}}\otimes..\otimes V^{j_{P}}
\big{)}
\otimes
\big{(}
V^{k_{1}}\otimes..\otimes V^{k_{Q}}
\big{)}\,,
\ee
which is clearly not the same Hilbert space as $\cH^{0}_{AB}$.
Considering a basis state $I_{A}\otimes I_{B}$ in $\cH^{0}_{AB}$, which is a straightforward tensor product of two intertwiners living respectively at the nodes $A$ and $B$, we can project it onto a boundary state in $H^{\pp}_{AB}$ by gluing the two intertwiners together:
\begin{align}
&\cP:\,I_{A}\otimes I_{B}\in\cH^{0}_{AB} \\
\mapsto\,
&\f1{\sqrt{2j+1}}\sum_{m=-j}^{j}
(-1)^{j-m}
\la j,m| I_{A} \otimes \la j,-m|I_{B}
\,\,\in H^{\pp}_{AB} \nn
\end{align}
where we have inserted the (trivial) bivalent intertwiner to glue back the spin $j$ part of the intertwiner $I_{A}$ with its counterpart in the intertwiner $I_{B}$. Then we extend this map linearly to the whole Hilbert space $\cH^{0}_{AB}$.

Loop quantum gravity actually defines a whole class of such maps by allowing for a $\SU(2)$ group element -holonomy- among the intermediate edge:
\begin{align}
&\cP_{g}:\,I_{A}\otimes I_{B}\in\cH^{0}_{AB} \\
\mapsto\,
&\f1{\sqrt{2j+1}}\sum_{a,b}
(-1)^{j-a} D^{j}_{ab}(g)
\la j,-a| I_{A} \otimes \la j,b|I_{B}
\,\in H^{\pp}_{AB}
\nn
\end{align}
where we have inserted the Wigner matrix $D^{j}_{ab}(g)$ representing the group element $g\in\SU(2)$ in the spin-$j$. The case $g=\id$ reproduces the previous map, $\cP_{\id}=\cP$. This case is special since it actually produces an intertwiner between the boundary spins:
\be
\cP\,[I_{A}\otimes I_{B}]
\,\in\mathrm{Inv}_{\SU(2)}\big{[}
H^{\pp}_{AB}
\big{]}\,.
\ee
For any other group element, this is not the case anymore\footnotemark{} and we reach arbitrary states in the boundary Hilbert space $H^{\pp}_{AB}$.
\footnotetext{It is nevertheless possible to get back an intertwiner by integrating the $\cP_{g}$ maps over equivalence classes of $\SU(2)$ group elements under conjugation \cite{Livine:2006xk}:
$$
\int \rd g\,
\delta_{\theta}(g)
\,\cP_{g}\,[I_{A}\otimes I_{B}]
\in\mathrm{Inv}_{\SU(2)}\big{[}
H^{\pp}_{AB}
\big{]}\,,
$$
where we integrate over all group elements, whose rotation angle is $\theta$. However, this averaging procedure does not have a clear physical or geometrical meaning for spin networks.
}

Focusing on the projection map $\cP=\cP_{\id}$, we can now define the spin state entanglement, considering the natural bipartite splitting of the boundary Hilbert space:
\be
H^{\pp}_{AB}=H_{A}^{\setminus (AB)}\otimes H_{B}^{\setminus (AB)}\,,
\ee
\be
H_{A}^{\setminus (AB)}=V^{j_{1}}\otimes..\otimes V^{j_{P}}
\,,\,\,
H_{B}^{\setminus (AB)}=V^{k_{1}}\otimes..\otimes V^{k_{Q}}\,.
\nn
\ee
This leads to defining a notion of spin state entanglement, by considering the density matrix on $H^{\pp}_{AB}$. Starting with a (pure) state $|\psi\ra$ in $\cH^{0}_{AB}$, we glue the intertwiners using the map $\cP$, get a state in the boundary Hilbert space $H^{\pp}_{AB}$, compute the reduced density matrices and finally the resulting entanglement entropy:
\be
\tpsi= \cP[\psi]
\,,\,\,
\trho=|\tpsi\ra\la \tpsi| \,\in \mathrm{End}[H^{\pp}_{AB}]\,,
\ee
\be
\trho_{A}=\tr_{B}\trho
\,,\quad
\tE(A|B)
=-\tr \trho_{A}\ln\trho_{A}
\,.
\ee
%
This is straightforwardly extended to any group element insertion along the intermediate edge, leading to defining spin state  entanglements $\tE_{g}(A|B)$ for any $g\in\SU(2)$.
These are all bounded by the dimensions of the boundary Hilbert spaces:
\be
\tE_{g}(A|B)\le \min(\ln\dim H_{A}^{\setminus (AB)},\,\ln\dim H_{B}^{\setminus (AB)})\,,
\ee
\be
\dim H_{A}^{\setminus (AB)}=\prod_{i=1}^{P}(2j_{i}+1)
\,,\,
\dim H_{B}^{\setminus (AB)}=\prod_{i=1}^{Q}(2k_{i}+1)\,.
\nn
\ee
For any set of spins, these are always (much) larger than the intertwiner space dimensions \footnotemark.
\footnotetext{
Simply due to the fact that all tensor products of spins are entirely reducible, one can recouple any set of spins to a single global spin:
$$
\bigotimes_{i=1}^{P}V^{j_{i}}=\bigoplus_{J}^{\sum_{i}j_{i}}
\mathrm{Inv}_{\SU(2)}\Big{[}V^{J}\otimes\bigotimes_{i=1}^{P}V^{j_{i}}\Big{]}\,.
$$
Taking  $J=j$, this means that $\ln \dim\cH^0_{A}\le\dim H_{A}^{\setminus (AB)}$.
}

As graphically represented on fig.\ref{fig:entanglement}, the intertwiner entanglement is a priori {\it not equal} to the spin state entanglement. The first deals entirely with $\SU(2)$-invariant states, while the latter is based a gluing procedure (which can be viewed as a coarse-graining map) and produces non-necessarily-gauge-invariant states. Our goal is to compare this two notions.
We will show below that:
\begin{itemize}
\item For a pure basis state, $I_{A}\otimes I_{B}$, the intertwiner entanglement obviously vanishes, $E=0$, while the spin state entanglement does not and is actually given exactly by $\tE=\ln(2j+1)$.

\item For an arbitrary pure state in $\cH^{0}_{AB}$, but not necessarily a straightforward tensor product state, both entanglements $E$ and $\tE$ do not vanish. Nevertheless, the difference between these two notions is always the same, $\Delta E=\tE-E=\ln(2j+1)$.

\item Holonomy insertions do not  matter, they do not affect the spin network entanglement,  $\tE_{g}(A|B)=\tE(A|B)$ for all group elements $g\in\SU(2)$ along the edge $A$-$B$ linking the two intertwiners.

\end{itemize}



\begin{prop}
\label{prop:superposition}
{\bf Intertwiner superpositions:}
\vspace*{1mm}\\
Let $\cI\in\cH^0_{AB}=\cH^0_{A}\otimes \cH^0_{B}$ be an arbitrary normalized intertwiner,  i.e. an arbitrary superposition of tensor product states, potentially carrying a non-trivial entanglement between its two parts $A$ and $B$. We consider its boundary state $\cP\,[I_{A}\otimes I_{B}] \in H^{\pp}_{AB}$, obtained by gluing the two intertwiners along their common spin-$j$ edge. Then the difference between the boundary spin entanglement, for the state $\cP\,[I_{A}\otimes I_{B}]$, and the intertwiner entanglement, for the state $I_{A}\otimes I_{B}$, only depends on the spin-$j$:
\be
\tE(A|B)=E(A|B)+\ln(2j+1)\,.
\ee

\end{prop}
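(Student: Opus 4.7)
The plan is to exploit the representation-theoretic structure of the intertwiner spaces to show that the gluing map $\cP$ factorizes the boundary state into a universal singlet on a pair of spin-$j$ subspaces sitting inside the boundary, tensored with a state that carries the entire intertwiner correlation on multiplicity spaces. Once that factorization is in hand, the $\ln(2j+1)$ simply comes from the entropy of the maximally mixed reduction of the singlet, and adds to the intertwiner entropy.

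First I decompose each boundary spin space into $\SU(2)$ irreducibles, $H_{A}^{\setminus (AB)}=\bigoplus_{J}V^{J}\otimes \cM^{A}_{J}$ and similarly for $B$, with $\cM^{A}_{J},\cM^{B}_{J}$ the multiplicity spaces. Since the unique singlet in $V^{J}\otimes V^{j}$ exists only at $J=j$ and is then one-dimensional, there is a canonical isomorphism $\cH^{0}_{A}\simeq \cM^{A}_{j}$: picking an orthonormal basis $\{|\alpha\ra\}$ of $\cM^{A}_{j}$ gives an orthonormal intertwiner basis $|I_{A}^{\alpha}\ra=|\alpha\ra\otimes |S_{A}\ra$, where $|S_{A}\ra$ is the normalized singlet in $V^{j}_{\mathrm{int}}\otimes V^{j}_{\mathrm{link}}$ on the $A$ side; similarly one gets $|I_{B}^{\beta}\ra=|\beta\ra\otimes |S_{B}\ra$.

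Second, the map $\cP$ is by definition the contraction of the two link $V^{j}$ factors against the normalized link singlet $\la S_{\mathrm{link}}|$. A direct computation then establishes the key factorization
\be
\cP[|I_{A}^{\alpha}\ra\otimes |I_{B}^{\beta}\ra]=\gamma\,|\alpha\ra\otimes|\beta\ra\otimes |\tl S_{AB}\ra\,,\nn
\ee
where $|\tl S_{AB}\ra$ is the normalized singlet between the internal $V^{j}$ subspaces of $H_{A}^{\setminus(AB)}$ and $H_{B}^{\setminus(AB)}$ and $\gamma=\pm(2j+1)^{-1}$ is independent of $\alpha,\beta$. By linearity, a generic $|\cI\ra=\sum_{\alpha\beta} c_{\alpha\beta}\,|I_{A}^{\alpha}\ra\otimes |I_{B}^{\beta}\ra$ is sent to $\cP[\cI]=\gamma\,|\tl S_{AB}\ra\otimes |C\ra$, where $|C\ra=\sum_{\alpha\beta} c_{\alpha\beta}|\alpha\ra\otimes|\beta\ra$ encodes the intertwiner coefficient matrix on the multiplicity spaces.

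Third, after normalization the boundary state is literally a tensor product of two bipartite states that both respect the $A|B$ split: $|\tl S_{AB}\ra$ across $V^{j}_{A}\otimes V^{j}_{B}$, and $|C\ra$ across $\cM^{A}_{j}\otimes \cM^{B}_{j}$. Hence the reduced boundary density matrix factorizes as
\be
\trho_{A}=\f{\id_{V^{j}}}{2j+1}\otimes \rho_{A}^{\mathrm{int}}\,,\nn
\ee
where $\rho_{A}^{\mathrm{int}}=CC^{\dagger}$ coincides exactly with the reduced density matrix of $|\cI\ra$ traced over $\cH^{0}_{B}$. Additivity of Von Neumann entropy over tensor products, together with the fact that the maximally mixed state on $V^{j}$ has entropy $\ln(2j+1)$, then yields $\tE(A|B)=\ln(2j+1)+E(A|B)$.

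The technical heart is the factorization in the second step: one must verify that contracting the two internal singlets hidden inside $I_{A}$ and $I_{B}$ through the link singlet produces a new singlet on the uncontracted internal $V^{j}$ pair while leaving the multiplicity labels $\alpha,\beta$ completely untouched. This is a short calculation with Clebsch--Gordan signs $(-1)^{j-m}$, and the precise value of $\gamma$ is irrelevant since it is removed by normalization. I do not expect any obstacle beyond that: additivity of entropy on tensor products is standard and the remaining arithmetic is purely structural.
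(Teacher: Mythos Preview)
Your argument is correct and rests on the same representation-theoretic fact as the paper's proof, namely that the projected states $|I_{A}^{k,m}\ra:=\la j,m|I_{A}^{k}\ra$ form an orthogonal family in $H_{A}^{\setminus(AB)}$; this is exactly your isomorphism $\cH^{0}_{A}\simeq \cM^{A}_{j}$ phrased differently. The paper proceeds more computationally: it expands $\cP[\cI]$ in the $|I_{A}^{k,m}\ra\otimes|I_{B}^{l,-m}\ra$ basis, uses the orthogonality $\la I_{A}^{k,m}|I_{A}^{\tk,\tm}\ra=(2j+1)^{-1}\delta_{k\tk}\delta_{m\tm}$ to trace out $B$, obtains $\trho_{A}=\sum_{k\tk}(\psi\psi^{\dagger})_{k\tk}\sum_{m}|I_{A}^{k,m}\ra\la I_{A}^{\tk,m}|$, and reads off the entropy directly from the eigenvalues. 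Your route repackages this by invoking the isotypic decomposition $H_{A}^{\setminus(AB)}=\bigoplus_{J}V^{J}\otimes\cM^{A}_{J}$ up front, so that the reduced density matrix visibly factorizes as $\trho_{A}=(2j+1)^{-1}\id_{V^{j}}\otimes\rho_{A}^{\mathrm{int}}$ and the $\ln(2j+1)$ shift drops out from additivity of Von Neumann entropy on tensor products. The mathematical content is identical; your presentation is more structural and makes the origin of the extra term transparent, while the paper's version is more hands-on and avoids appealing to the abstract decomposition.
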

\begin{proof}
We choose orthonormal basis for the intertwiners at the two nodes $A$ and $B$:
\be
I_{A}^k\in\cH^0_{A}=\mathrm{Inv}\big{[}
\bigotimes_{i=1}^pV^{j_{i}}\otimes V^j
\big{]}
\,,
\,\,
\la I_{A}^k|I_{A}^{\tk}\ra=\delta_{k\tk}\,,
\ee
and the same for the intertwiner space $\cH^0_{B}$. We consider an arbitrary state:
\be
\cI=\sum_{kl}\psi_{kl}\,|I_{A}^k\ra\otimes |I_{B}^l\ra
\,,\quad
\la \cI|\cI\ra=\tr \psi\psi^\dagger=1
\,.
\ee
The matrix $(\psi)_{kl}$ is a priori not a square matrix, but the matrix $\psi\psi^\dagger$ is always a square matrix. We compute the reduced density matrix:
\be
\rho_{A}=\sum_{k,\tk}
\big{(}\psi\psi^\dagger\big{)}_{k\tk}\,
|I_{A}^k\ra\la I_{A}^{\tk}|
\,,
\quad
\tr\rho_{A}=1
\,,
\ee
from which we deduce the intertwiner entanglement:
\be
E(A|B)=-\tr\,\psi\psi^\dagger\ln\,\psi\psi^\dagger
\,.
\ee
Now we glue the intertwiners at nodes $A$ and $B$ together along the link $A$-$B$ using the bivalent intertwiner for the spin $j$, i.e. the singlet state in $V^j\otimes V^j$:
\be
\cP[\cI]=
\f1{\sqrt{2j+1}}\sum_{kl}\psi_{kl}\,
\sum_{m=-j}^{j}(-1)^{j-m}
|I_{A}^{k,m}\ra\otimes|I_{B}^{l,-m}\ra
\,,
\nn
\ee
with the notation $|I_{A}^{k,m}\ra=\la j,m | I_{A}^k\ra\,\in\bigotimes_{i=1}^pV^{j_{i}}$. The crucial property of intertwiners is these states are orthonormal:
\be
\la I_{A}^{k,m} | I_{A}^{\tk,\tm}\ra
=\,
\f1{2j+1}\,
\delta_{k,\tk}\delta_{m,\tm}
\,.
\ee
Indeed, each intertwiner basis label $k$ defines an embedding of the tensor product $\bigotimes_{i=1}^pV^{j_{i}}$ into the single spin space $V^j$, i.e. a channel recoupling the spins $j_{1},..,j_{P}$ into the spin $j$. This implies that projecting the intertwiner\footnotemark{} on different magnetic moments $m$ lead to orthogonal states in $V^j\hookrightarrow\bigotimes_{i=1}^pV^{j_{i}}$.
\footnotetext{
Equivalently, the original intertwiner $I_{A}^k$ can be recovered from its projections:
\be
|I_{A}^k\ra=\sum_{m=-j}^j |I_{A}^{k,m}\ra\otimes |j,m\ra.
\nn
\ee
And we recover the norm $\la I_{A}^k|I_{A}^k\ra = \sum_{m}\la I_{A}^{k,m}|I_{A}^{k,m}\ra=1$.
}
The same holds for the node $B$.

This allows to compute the norm of the boundary state $\cP[\cI]$:
\be
\la\, \cP[\cI]\,|\,\cP[\cI]\,\ra\,=\,\f1{(2j+1)^2}
\,,
\ee
which means that we need to renormalize the state and  consider the normalized state $|\cP[\cI]\ra_{n}=(2j+1)\,|\cP[\cI]\ra$.
%
%
Using this correct normalization, we compute the reduced density on the boundary spins of $A$, as an endomorphism on $H_{A}^{\setminus (AB)}=\bigotimes_{i=1}^pV^{j_{i}}$:
\be
\trho_{A}=
\sum_{k\tk}(\psi\psi^\dagger)_{k\tk}\sum_{m=-j}^j
|I_{A}^{k,m}\ra\la I_{A}^{\tk,m}|
\,,\quad
\tr\,\trho_{A}=1
\,.
\ee
This directly gives the spin state entanglement:
\be
\tE(A|B)=\ln(2j+1)-\tr\,\psi\psi^\dagger\ln\,\psi\psi^\dagger
\,,
\ee
thus concluding the proof of the proposition.

\end{proof}

As a consequence of this proposition, we apply it to pure tensor product states, which do not carry any intertwiner entanglement, $E(A|B)=0$, thus getting a spin state entanglement $\tE(A|B)$ on the boundary only depending on the spin $j$ carried by the link between $A$ and $B$
\begin{coro}
{\bf Basis States:}
\vspace*{1mm}\\
Let $I_{A}\otimes I_{B}\in\cH^0_{AB}=\cH^0_{A}\otimes \cH^0_{B}$ be a normalized intertwiner tensor product state. We consider its boundary state $\cP\,[I_{A}\otimes I_{B}] \in H^{\pp}_{AB}$, obtained by gluing the two intertwiners along their common spin-$j$ edge using the bivalent intertwiner (or singlet state). Then the boundary spin entanglement carried by this tensor product intertwiner is simply:
\be
\tE(A|B)=\ln(2j+1)\,.
\ee
\end{coro}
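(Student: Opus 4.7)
The plan is to derive this corollary as an immediate consequence of Proposition \ref{prop:superposition}. The state $I_{A}\otimes I_{B}$ is by construction a pure tensor product in $\cH^0_{AB}=\cH^0_{A}\otimes\cH^0_{B}$, so its reduced density matrices on either factor are rank-one projectors $|I_{A}\ra\la I_{A}|$ and $|I_{B}\ra\la I_{B}|$. Consequently the intertwiner entanglement vanishes, $E(A|B)=0$. Plugging this into the general identity $\tE(A|B)=E(A|B)+\ln(2j+1)$ established in Proposition \ref{prop:superposition} immediately yields $\tE(A|B)=\ln(2j+1)$.

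If one instead wants a self-contained verification without invoking the proposition, the strategy is to specialize the calculation in its proof to the case where the coefficient matrix $\psi_{kl}$ has a single non-zero entry (say $\psi_{kl}=\delta_{k k_0}\delta_{l l_0}$ for the chosen basis intertwiners). Then $\psi\psi^{\dagger}$ is a rank-one projector and $\tr\,\psi\psi^{\dagger}\ln\psi\psi^{\dagger}=0$. The glued boundary state reads
\be
\cP[I_{A}\otimes I_{B}]=\f1{\sqrt{2j+1}}\sum_{m=-j}^{j}(-1)^{j-m}\,|I_{A}^{m}\ra\otimes|I_{B}^{-m}\ra,
\nn
\ee
and using the orthogonality relation $\la I_{A}^{m}|I_{A}^{\tm}\ra=\f1{2j+1}\delta_{m\tm}$ (and likewise at $B$), after renormalization the reduced density matrix on $H_{A}^{\setminus(AB)}$ becomes the maximally mixed state on the $(2j+1)$-dimensional subspace spanned by the orthonormal vectors $\sqrt{2j+1}\,|I_{A}^{m}\ra$. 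Its Von Neumann entropy is therefore exactly $\ln(2j+1)$.

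There is no real obstacle here: the corollary is essentially a one-line specialization of the proposition. The only point that deserves a brief comment is the physical interpretation, namely that even a completely decoupled pair of intertwiners, which by construction carries no gauge-invariant correlation across the link, appears to be maximally entangled at level $\ln(2j+1)$ once one looks at the non-gauge-invariant boundary spin state. This reinforces the paper's claim that the universal $\ln(2j+1)$ term is an artifact of gauge breaking by the gluing map $\cP$ rather than a genuine physical correlation between the nodes $A$ and $B$.
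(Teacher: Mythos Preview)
Your proof is correct and matches the paper's approach exactly: the corollary is stated as a direct consequence of Proposition~\ref{prop:superposition}, applied to a tensor product state with $E(A|B)=0$. Your optional self-contained verification is simply the specialization of the proposition's proof to a rank-one $\psi$, which is fine but not needed.
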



We can generalize these results on entanglement to the case of a non-trivial $\SU(2)$ holonomy along the edge linking the two intertwiners $A$ and $B$. We start by the entanglement between the two spin states living on the edge itself.

\begin{lemma}
{\bf Bivalent Intertwiners:}
\vspace*{1mm}\\
Let us consider the singlet state of two spins $j$:
\be
\sum_{m} (-1)^{j-m}\,|j,m\ra\otimes  |j,-m\ra\
\in\,\mathrm{Inv}_{\SU(2)}\big{[}
V^j\otimes V^j\big{]}
\,.
\nn
\ee
The entanglement between the two spins is maximal and its value is $\ln(2j+1)$. If the two spins are related by a $\SU(2)$ group element $g\in\SU(2)$,
\be
\sum_{a,b} (-1)^{j-b}D^j_{ab}(g)\,|j,a\ra\otimes  |j,-b\ra
\,\in\, 
V^j\otimes V^j
\,,
\nn
\ee
then this amounts to a local unitary transformation on one of the two spins and thus does not affect the entanglement: the two spins remain maximally entangled.

\end{lemma}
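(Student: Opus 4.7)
My strategy is a two-step direct computation. For the first claim I would simply normalize the singlet state and trace out one factor, reading off the entropy from the resulting maximally mixed reduced density matrix. For the second claim I would observe that the group-element insertion amounts to a local unitary on one tensor factor, and invoke invariance of the Von Neumann entropy under local unitaries.

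For step one, the vector $|S\ra = \sum_m(-1)^{j-m}|j,m\ra\otimes|j,-m\ra$ has squared norm $2j+1$, because the vectors $|j,m\ra\otimes|j,-m\ra$ are orthonormal and the signs have unit modulus. Working with the normalized singlet $|\hat S\ra = (2j+1)^{-1/2}|S\ra$ and tracing out the second spin in $|\hat S\ra\la\hat S|$ yields $\rho_1 = \f{1}{2j+1}\sum_m |j,m\ra\la j,m| = \f{1}{2j+1}\,\id_{V^j}$, the maximally mixed state on $V^j$. Its Von Neumann entropy is therefore $\ln(2j+1)$, saturating the dimensional bound and confirming maximal entanglement.

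For step two, the identity $\sum_a D^j_{ab}(g)|j,a\ra = D^j(g)|j,b\ra$ is just the definition $D^j_{ab}(g) = \la j,a|D^j(g)|j,b\ra$. Summing over $a$ first in the $g$-twisted state gives $(D^j(g)\otimes\id)|S\ra$. Since $D^j(g)$ is unitary on $V^j$, the twisted state differs from $|\hat S\ra$ by a local unitary acting on the first factor only, which leaves the spectrum of both reduced density matrices unchanged and hence preserves the entanglement at $\ln(2j+1)$. The only mild obstacle is bookkeeping: one has to check that the free index of the second factor $|j,-b\ra$ is indeed $b$, so that $D^j(g)$ cleanly factors out on the first factor, and that the phase $(-1)^{j-b}$ belongs to the singlet structure rather than to the unitary. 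Once conventions are aligned, the lemma reduces to the textbook fact that the maximally entangled state on two copies of a $d$-dimensional space has reduced density matrix $\id/d$.
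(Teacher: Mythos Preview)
Your proposal is correct and takes essentially the same approach as the paper, which simply remarks that the Wigner matrix $D^j(g)$ is unitary and leaves the rest as straightforward. You have spelled out the normalization, the partial trace yielding $\id/(2j+1)$, and the factorization $(D^j(g)\otimes\id)|S\ra$ that the paper takes for granted, but the underlying idea is identical.
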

\begin{proof}
The proof is straightforward, since the Wigner matrix $D^j(g)$ representing the group element $g\in\SU(2)$ is a unitary matrix.
\end{proof}

Generalizing this lemma to arbitrary $N$-valent intertwiners, we can extend the proposition \ref{prop:superposition} to gluing of the intertwiners using an arbitrary $\SU(2)$ holonomy along the  shared edge $A$-$B$:

\begin{prop}
{\bf Holonomy insertions:}
\vspace*{1mm}\\
Let $\cI\in\cH^0_{AB}=\cH^0_{A}\otimes \cH^0_{B}$ be an arbitrary normalized intertwiner. We glue the two intertwiners by the shared edge $A$-$B$ carrying the fixed spin $j$ and an arbitrary group element $g\in\SU(2)$, thus defining the boundary state $\cP_{g}\,[I_{A}\otimes I_{B}] \in H^{\pp}_{AB}$. Then the boundary spin state entanglement $\tE(A|B)$ does not depend on the insertion of $g$ and the difference between the  spin state entanglement and the intertwiner entanglement remains $ \tE(A|B)-E(A|B)=\ln(2j+1)$, as in the $g=\id$ case with a trivial gluing of the two intertwiners.

\end{prop}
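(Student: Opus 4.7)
The plan is to reduce this statement to Proposition~\ref{prop:superposition} by showing that the holonomy $g$ drops out of the boundary reduced density matrix thanks to the unitarity of the Wigner matrix $D^j(g)$. Morally, this is the $N$-valent version of the bivalent lemma proved just above: inserting $g$ along the shared edge acts as a local unitary on the internal spin-$j$ leg, and that leg is traced out when forming $\trho_A$.

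First I would expand the image of an arbitrary superposition $\cI=\sum_{kl}\psi_{kl}\,|I_A^k\ra\otimes|I_B^l\ra$ under the deformed gluing map,
\be
\cP_g[\cI]=\f1{\sqrt{2j+1}}\sum_{kl,a,b}\psi_{kl}(-1)^{j-a}D^j_{ab}(g)\,|I_A^{k,-a}\ra\otimes|I_B^{l,b}\ra,
\ee
reusing the projected-intertwiner notation $|I_A^{k,m}\ra=\la j,m|I_A^k\ra\in H_A^{\setminus(AB)}$. Next I would compute $\trho_A$ by tracing over the $B$-boundary. The orthogonality $\la I_B^{l,b}|I_B^{l',b'}\ra=\f1{2j+1}\delta_{ll'}\delta_{bb'}$ established in the previous proof forces the surviving pair of Wigner factors to share the index $b$, and unitarity then produces $\sum_b D^j_{ab}(g)\,\overline{D^j_{a'b}(g)}=\delta_{aa'}$. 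This is the only step with real content, and it is precisely the bivalent lemma in disguise. A similar computation shows that $\la\cP_g[\cI]|\cP_g[\cI]\ra=1/(2j+1)^2$ independently of $g$, so the normalization is the same as at $g=\id$.

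After these cancellations, $\trho_A$ collapses to
\be
\trho_A=\f1{(2j+1)^2}\sum_{k\tk,m}(\psi\psi^\dagger)_{k\tk}\,|I_A^{k,m}\ra\la I_A^{\tk,m}|\,,
\ee
which, after the same rescaling by $(2j+1)^2$ as in Proposition~\ref{prop:superposition}, coincides with the reduced density matrix at $g=\id$. Its spectrum is therefore $g$-independent, so $\tE_g(A|B)=\tE(A|B)$, and combining with Proposition~\ref{prop:superposition} gives $\tE_g(A|B)-E(A|B)=\ln(2j+1)$. The only real obstacle is bookkeeping, namely tracking the $(-1)^{j-a}$ sign factors and performing the change of summation $a\mapsto-m$ cleanly enough to match the expression from the previous proof; once this is done, the conceptual step, that the holonomy is absorbed by a unitary row-sum on the traced-out leg, is automatic.
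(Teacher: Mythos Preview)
Your proposal is correct and follows essentially the same route as the paper's own proof, which simply says to rerun the steps of Proposition~\ref{prop:superposition} and observe that $g$ drops out as a local unitary (as in the bivalent lemma). You make those steps explicit: the $B$-trace forces the Wigner matrices to share their second index, unitarity $\sum_b D^j_{ab}\overline{D^j_{a'b}}=\delta_{aa'}$ collapses the sum, and the reduced density matrix coincides with the $g=\id$ case.
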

\begin{proof}
We prove this statement by follow the exact same steps as for the proof of proposition \ref{prop:superposition}. At the final stage, when computing the reduced density matrix for the bipartite splitting $A|B$ of the boundary spins, we realize that the group element $g\in\SU(2)$ disappears since it defines a local unitary transformation acting only on $B$ (or acting only on $A$) as for the lemma above.
\end{proof}

In fact, one can always change the $\SU(2)$ holonomy along the edge linking the two nodes $A$ and $B$, and in particular set $g_{AB}=\id$, by doing an arbitrary $\SU(2)$ gauge transformation at $A$ or at $B$. This is actually at the heart of the ``coarse-graining by gauge-fixing'' procedure introduced in \cite{Livine:2006xk,Livine:2013gna,Charles:2016xwc}. So it is completely natural that the entanglement between the two nodes $A$ and $B$ does not depend on that holonomy. On the other hand, if we were considering three nodes, $A$, $B$ and $C$, all linked to each other by spin network edges, one could not gauge-away the $\SU(2)$ holonomy around the closed loop $A\rightarrow B\rightarrow C\rightarrow A$ and the tri-partite entanglement between the three intertwiners would most certainly depend non-trivially on this holonomy. Also, if we were to consider a superposition of $\SU(2)$ group elements $g$, or a superposition of spins $j$, on the edge linking $A$ and $B$, this would most necessarily affect the entanglement.

\section{Spin-$\f12$ examples and Qubit entanglement}

To illustrate the distinction between the intertwiner entanglement and the spin state entanglement, let us give examples with 3-valent and 4-valent intertwiners.

Let us start with two 3-valent intertwiners, as depicted on fig.\ref{fig:3valent}. On one side, we are intertwining the spins $j_{1}$ and $j_{2}$ with the common spin $j$, while we are intertwining the spins $k_{1}$ and $k_{2}$ with $j$ on the other side. Assuming that the triangular inequalities are satisfied at both nodes, the 3-valent intertwiners are unique so that they can not be any intertwiner entanglement. We nevertheless get a non-vanishing spin state entanglement due to the spin $j$ channel on the link:
\be
E(A|B)=0
\,,\quad
\tE(A|B)=\ln(2j+1)
\,.
\ee
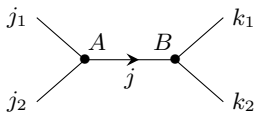
\begin{figure}[h]

\begin{tikzpicture}[scale=.8]

\coordinate(a) at (0,0) ;
\coordinate(b) at (1.5,0);

\draw (a) node {$\bullet$} ++(0.2,0.3) node{$A$};
\draw (b) node {$\bullet$} ++(-0.2,0.3)  node{$B$};
\alink{a}{b}{below}{j};

\draw (a)--++(-.8,0.7) node[left]{$j_{1}$};
\draw (a)--++(-.8,-.7)  node[left]{$j_{2}$};

\draw (b)--++(.8,0.7) node[right]{$k_{1}$};
\draw (b)--++(.8,-.7)  node[right]{$k_{2}$};

\end{tikzpicture}

\caption{The case of two 3-valent nodes is simple: 3-valent intertwiners, between three given spins, are unique, so they can not be any intertwiner entanglement, $E(A|B)=0$, and the spin state entanglement is entirely due to the spin $j$ channel linking the two nodes, $\tE(A|B)=\ln (2j+1)$.
\label{fig:3valent}}

\end{figure}

Let us describe the first non-trivial example using 4-valent intertwiners between spin $\f12$, as on fig.\ref{fig:4valent}. In this case, both intertwiner spaces at the nodes $A$ and $B$ are two-dimensional:
\be
\dim\,\cH^0_{A}
\,=\,
\dim\,\cH^0_{B}
\,=2
\,.
\ee
So we have one 2-level system, or qubit, at each node. This is the basic mapping of spin network dynamics onto qubit systems \cite{Feller:2015yta}. One can choose a recoupling basis for the intertwiners. For instance, we can pair together the spins $j_{1}$ with $j_{2}$, and $j_{3}$ with $j$, recoupling each pair to the intermediate spin 0 or 1, thus defining two intertwiner basis states $|0\ra_{A}$ and $|1\ra_{A}$. One can also choose another pairing, or an orthonormal basis diagonalizing an interesting Hermitian observable such as the volume operator eigenstates as in \cite{Feller:2015yta}. The choice of intertwiner basis states does not matter for the entanglement in the end.
\begin{figure}[h]
\begin{center}

\begin{tikzpicture}[scale=1.2]

\coordinate(a) at (0,0) ;
\coordinate(a0) at (-.5,.6) ;
\coordinate(b) at (1.5,0);
\coordinate(b0) at (2,.6) ;

\draw (a) node {$\bullet$}   ++(0,-.2) node{$A$};
\draw (b) node {$\bullet$}  ++(-.1,-.2) node{$B$};
\alink{a}{b}{above}{j=\f12};

\draw[dashed, thick] (a0)--(a) ++(-.75,0.2) node{0 or 1};
\draw (a0)--++(-.7,0.6) node[left]{$\f12=j_{1}$};
\draw (a0)--++(-.75,0.1) node[left]{$\f12=j_{2}$};
\draw (a)--++(-.8,-.7)  node[left]{$\f12=j_{3}$};

\draw[dashed, thick] (b0)--(b) ++(0.7,0.2) node{0 or 1};
\draw (b0)--++(.7,0.6) node[right]{$k_{1}=\f12$};
\draw (b0)--++(.75,0.1) node[right]{$k_{2}=\f12$};
\draw (b)--++(.8,-.7)  node[right]{$k_{3}=\f12$};

\end{tikzpicture}

\caption{The simplest case with intertwiner entanglement is to consider two 4-valent intertwiners between spins $\f12$. The intertwiners define a two-level system at each node, for example by recoupling at the node $A$ the spins $j_{1}$ and $j_{2}$ to a spin 0 or 1, and similarly at the node $B$. This allows to consider tensor product states such as $|0\ra_{A}\otimes |0\ra_{B}$ or $|1\ra_{A}\otimes |1\ra_{B}$, as well as Bell states maximally entangling the intertwiners at the two nodes. In all cases, we can compute the intertwiner entanglement $E(A|B)$ and the boundary spin state entanglement and we always find that their difference is given by the spin on the edge linking the two nodes, $\ln(2j+1)=\ln 2$.
\label{fig:4valent}}
\end{center}
\end{figure}
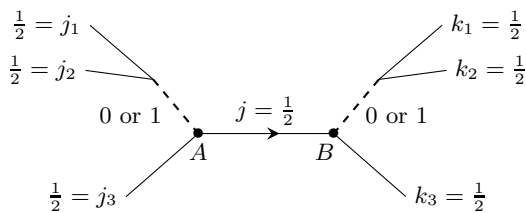
Calculations writing explicitly the detailed state of the six boundary spins (in terms of up and down states) can be lengthy but are nevertheless straightforward. On the one hand, for a tensor product state, such as $|0\ra_{A}\otimes |0\ra_{B}$, one finds that the intertwiner entanglement vanishes while the spin state entanglement entropy sees a maximally entangled pair along the edge $A$-$B$:
\be
\cI=|0\ra_{A}\otimes |0\ra_{B}
\quad\Longrightarrow \quad
\left|
\begin{array}{l}
E(A|B)=0
\\
\tE(A|B)=\ln 2
\end{array}
\right.
\ee
On the other hand, for a Bell state, such as $(|0\ra_{A}\otimes |0\ra_{B}-|1\ra_{A}\otimes |1\ra_{B})/\sqrt{2}$, the intertwiners are already maximally entangled and the spin state entanglement doubles this, thereby respecting the general analysis presented in the previous section:
\be
\cI=
\f{|0\ra_{A}\otimes |0\ra_{B}-|1\ra_{A}\otimes |1\ra_{B}}{\sqrt{2}}
\,\Longrightarrow \,
\left|
\begin{array}{l}
E(A|B)=\ln 2
\\
\tE(A|B)=2\ln 2
\end{array}
\right.
\nn
\ee

More generally, it might be interesting to develop further the mapping of the spin-$\f12$ sector or the spin network Hilbert space onto qubit systems as used in condensed matter models, in order to study further entanglement properties of spin network states, for instance entanglement between far nodes and multi-partite entanglement.

\section{Outlook}

To summarize, we consider two nearest neighbor vertices, $A$ and $B$, on a spin network states, linked with an edge of the graph, and we consider the region of space formed by putting the two vertices together: the vertices and the intermediate edge form the bulk , while all the other edges attached to those vertices form the region's boundary. We look at the spin state entanglement entropy $\tE(A|B)$ between the boundary spins attached to $A$ and those attached to $B$. Assuming that the intermediate edge carries a definite spin $j$ (i.e. not a superposition), the spin state entanglement is the sum of two terms, the entanglement $E(A|B)$ between the intertwiners living at the two vertices plus a fixed contribution $\ln(2j+1)$ from the gluing of the two intertwiners.
This is very similar to the results obtained in \cite{Delcamp:2016eya} using the more involved technology of spin tubes and the fusion basis of lattice gauge field theories.

They are direct extensions of the results presented here, studying superpositions of the spin living on the edge linking the two nodes or inserting a $\SL(2,\C)$ group element or little loops along that edge to allow for local curvature excitations \cite{Charles:2016xwc,Loops17:Livine}. Besides these short-term improvements, our simple analysis raises a certain number of potentially deeper questions:

\begin{itemize}

\item The notion of entanglement that one should use depends on the observable that we would like to measure. For instance, correlations on a boundary would require computing the spin state entanglement, while correlations between volume excitations would involve the intertwiner entanglement. But this begs to clarify the role of the graph underlying the spin network states. The standard view in loop quantum gravity is that the graph is the backbone of the discrete quantum geometries defined by the spin network states. However, the entanglement between two intertwiners does not depend on the link or gluing process along that link, it directly measure the correlations between the gauge-invariant degrees of freedom sitting at the  vertices. In particular, computing the intertwiner entanglement between two nodes does not depend on whether these nodes are nearest neighbors or related to a long chain of spin network links.

So what is the role of the spin network edges (beside mathematically setting constraints of matching spin at both ends)? Actually spin network edges might not be physically important information, despite traditionally being considered  as a central ingredient in loop quantum gravity since the edges carry the spins which are the quanta of area. Indeed the action of diffeomorphisms (generated by the Hamiltonian constraints) on spin networks is usually understood as modifying the graph structure and combinatorics. From this perspective, the graph can be interpreted as describing the sampling of the space geometry according to one observer.  Then another observer could see a different (spin) network but defining the  same structure of correlations between points (identified here as the spin network vertices). So intertwiner entanglement would become the fundamental notion. This is a similar point of view as in group field theory \cite{Oriti:2014uga,Chirco:2017vhs}.

\item Since the spin state entanglement on the boundary sees the spin $j$ on the intermediate link between the two nodes, this suggests to use this entanglement formula, and potentially its generalization to many-node entanglement, to probe the bulk structure. For instance, if we only had access to the boundary spin, as coarse-graining scenarios, we could explore all bipartite splitting of the boundary and look for a possible reconstruction of the bulk in terms of two spin network vertices. 

\item The distinction between intertwiner entanglement and spin state entanglement invites to revisit the use of entanglement entropy for black holes in loop quantum gravity \cite{Donnelly:2008vx,Perez:2014ura}. Indeed, the standard point of view is that the black hole boundary cuts spin network edges so that the horizon is a collection of spin states -$\SU(2)$ punctures- defining a quantum surface \cite{Asin:2014gta,Feller:2017ejs}. So as explained in \cite{Perez:2014ura}, all the entanglement entropy between the black hole interior and exterior is usually taken as the sum of the $\ln(2j+1)$ contributions coming from all the edges crossing the horizon. However, the present work offers the perspective that the $\ln(2j+1)$ term, coming from the gluing by a bivalent intertwiner on each edge, is not the main contribution of the entanglement entropy on a spin network. So we would like to challenge the current framework for black hole entropy in loop quantum gravity and suggest that it might be more physically relevant to look at entanglement between intertwiners - the actual geometry excitations- inside and outside the horizon.

One can further argue that the $\ln(2j+1)$ contributions come from breaking the $\SU(2)$ gauge invariance at the boundary. To settle this issue and cleanly defining the relevant entanglement entropy for black holes in loop quantum gravity, one way out could be to introduce a concept of gauge-invariant boundary, for instance as a layer of loops consisting in the edges crossed by the horizon together with their end vertices and a minimal set of spin network edges connecting all the boundary edges, as suggested in \cite{Feller:2017ejs}. One would then be looking at the three-body entanglement and correlation between the boundary layer, the bulk geometry within the black hole and the exterior geometry.

\end{itemize}

To conclude, it is a whole research program to analyze the entanglement structure of spin networks and look for ``good'' spin network states with good correlation and entanglement, thereby allowing for a semi-classical interpretation.  Indeed the hope is to use  correlations to reconstruct a notion of distance between parts of the spin networks. We would  then study the multi-body entanglement between three or more regions of spin networks in order to discuss curvature. More generally, it would probably be very interesting to study the mutual information between arbitrary subsystems of spin networks, following \cite{e17053253}, and map the resulting hierarchy of correlations to a hierarchy of geometrical observables on spin networks.
This line of thoughts seems to converge with the use of the multiscale entanglement renormalization  ansatz (MERA from tensor network renormalization in condensed matter) to the holographic reconstruction of the bulk geometry in the framework of the AdS/CFT correspondence \cite{Hu:2017rsp,Hayden,Yoshida}. The goal would be to identify the equivalent of local entangling and disentangling operations for spin network states in order to build semi-classical spin network states with an interesting entanglement structure and respecting a local version of holography.

\section*{Acknowledgement}

I would like to thank the organizers of the TGSI 2017 conference at the CIRM (Marseille, France), Aug 28 - Sept 1 2017, for the great interdisciplinary and collaborative atmosphere, which motivated me to explore  the information point of view on loop quantum gravity.

I am also grateful to Simone Speziale and Goffredo Chirco for stimulating discussions on the structure of spin networks and encouragements on clarifying the Hilbert spaces used in entanglement calculations.



\newpage

\bibliographystyle{bib-style}
\bibliography{LQG}

\end{document}